\documentclass[a4paper,11pt,oneside]{article}
\usepackage[english]{babel}
\usepackage[dvips]{graphicx}
\usepackage[svgnames]{xcolor}

\colorlet{fixed}{LightGoldenrodYellow!75!White}
\colorlet{iter1}{Salmon!30!White}
\colorlet{iter2}{Salmon!55!White}

\colorlet{fixeda}{LightGreen!50!White}
\colorlet{iter1a}{SlateBlue!25!White}
\colorlet{iter2a}{SlateBlue!40!White}

\colorlet{halfgray}{DarkGrey}

\usepackage{pst-all} 
\newpsobject{showgrid}{psgrid}{subgriddiv=1,griddots=10,gridlabels=6pt}
\psset{arrowscale=2}

\usepackage{amssymb}
\usepackage[tbtags]{amsmath}
\usepackage{amsthm}
\usepackage{icomma}

\theoremstyle{definition}
 
\newtheorem{example}{Example}

\theoremstyle{plain}

\newtheorem{lemma}{Lemma}

\usepackage[T1]{fontenc}
\usepackage{lmodern}
%\usepackage[osf,sc]{mathpazo}
%\linespread{1.05}
%\usepackage[euler-digits,euler-hat-accent]{eulervm}

\usepackage[top=32mm]{geometry}

\usepackage{caption}
\captionsetup{font=small,labelfont=bf}

\usepackage{booktabs}
\usepackage[flushleft]{threeparttable}

\usepackage{tabularx}

\usepackage[authoryear,round,longnamesfirst]{natbib}
\setlength{\bibsep}{.5ex}

\usepackage{fancyhdr}
\pagestyle{fancy}

\fancyhf{}
\lhead{\color{gray}\small J. C. Lucero: Pumping lemmas for F-systems}
\cfoot{\thepage}

\usepackage{hyperref}
\hypersetup{
  linktocpage=true,
  linkcolor  = MidnightBlue,
  citecolor  = MidnightBlue,
  urlcolor   = MidnightBlue,
  colorlinks = true,
}

\title{Pumping lemmas for classes of languages generated by folding systems}
\author{Jorge C. Lucero\thanks{Dept.\ Computer Science, University of Bras\'{i}lia, Brazil. E-mail: \href{mailto:lucero@unb.br}{lucero@unb.br} }} 
\date{December 14, 2018} 

\usepackage[euler]{textgreek}

\newcommand{\str}[1]{\texttt{#1}}
\newcommand{\eee}{\texttt{\textepsilon}}
\DeclareMathOperator{\lcm}{lcm}

\begin{document}

\maketitle

\begin{abstract}
Geometric folding processes are ubiquitous in natural systems ranging from protein biochemistry to patterns of insect wings and leaves. In a previous study, a folding operation between strings of formal languages was introduced as a model of such processes. The operation was then used to define a folding system (F-system) as a construct consisting of a core language, containing the strings to be folded, and a folding procedure language, which defines how the folding is done. This paper reviews main definitions associated with F-systems and next it determines necessary conditions for a language to belong to classes generated by such systems. The conditions are stated in the form of pumping lemmas and four classes are considered, in which the core and folding procedure languages are both regular, one of them is regular and the other context-free, or both are context-free. Full demonstrations of the lemmas are provided, and the analysis is illustrated with examples. 
\end{abstract}

\section{Introduction}
\label{intro}

In a recent paper, \citet{Sburlan2011} introduced a folding operation for strings of symbols of a given formal language. The operation was inspired in actual geometric folding processes that occur in, e.g., protein biochemistry \citep{Dobson2003}, in-vitro DNA shaping \citep{Rothemund2006}, and even origami \citep[the Japanese art of paper folding;\linebreak][]{Demaine2007},
and it was proposed as a restricted computational model of such processes. 

The folding operation was applied to define folding systems (F-systems) of the form $\Phi=(L_1, L_2)$, where $L_1$ is the language that contains the strings to be folded (the core language), and $L_2$ is the language that contains strings defining how the folding must be performed (the folding procedure language). Then, the computing power of F-systems was investigated by comparison with standard language classes from the Chomsky hierarchy; i.e., regular, context-free, context-sensitive, recursive and recursively enumerable languages. 

The paper fitted well within a growing body of applications of geometric folding to science and technology which have surged in recent years; in, e.g., aerospace and automotive technology \citep{Cipra2001}, civil engineering \citep{Filipov2015}, biology \citep{Mahadevan2005}, and robotics \citep{Felton2014}. Also, a number of theoretical studies have considered algebraic models, algorithmic complexity, and other mathematical and computational aspects of folding\citep[e.g.,][]{Akitaya2017,Alperin2000, Ida2016}. 

F-systems might find relevant applications to DNA computing and related areas of natural computing \citep{Kari2008}. Thus, the general purpose of the present paper is to further explore capabilities and limitations of such systems, and it will consider the classes of languages generated when the core and the folding procedure languages are regular or context-free. Necessary conditions for a language to belong to some of those classes were presented by \citet{Sburlan2011} in the form of pumping lemmas, similar to the well known pumping lemmas
for regular and context-free languages \citep{Hopcroft2001}.  His analysis considered the cases in which both the core and the folding procedure language are regular, and in which the core language is context-free and the folding procedure language is regular. Here,  the two cases  will be revised, in order to solve detected inconsistencies  (see footnote \ref{fsbu} to Lemma \ref{lemma1}). 
The lemmas will be restated in a weaker form (as consequence of the revision) and full proofs will be provided. Also, it will be shown that the same lemma for the case in which the core language is context-free and the folding procedures language is regular also applies to the case in which the class attribution is reversed. Finally, a lemma for the remaining case in which both the core and the folding procedure languages are context-free will be presented and proved. 

\section{Folding systems}
\label{fsystems}

For clarity of the analysis, let us review main definitions associated to folding operations and systems. 

Let $\Sigma$ be an alphabet, $\Gamma=\{\str{u}, \str{d}\}$, and $f:\Sigma^*\times\Sigma\times\Gamma\rightarrow\Sigma^*$ a function such that
\begin{equation}
f(w, a, b)= \left\{
\begin{array}{ll}
aw, & \text{ if } b=\str{u},\\
wa, & \text{ if } b=\str{d}.
\end{array}
\right.
\end{equation}
Then, the folding function $h:\Sigma^*\times\Gamma^*\rightarrow\Sigma^*$ is a partial function defined by
\begin{equation}
\label{deff}
h(w, v)=\left\{
\begin{array}{ll}
f(f(\ldots f(\eee, a_1, b_1)\ldots, a_{k-1}, b_{k-1}), a_k, b_k), &\text{ if } |w|=|v|>0\\
\eee, & \text{ if } |w|=|v|=0,\\
\text{undefined,} & \text{ if } |w|\neq|v|.
\end{array}
\right.
\end{equation}
where $w=a_1a_2\ldots a_k$, $v=b_1b_2\ldots b_k$, $a_i\in\Sigma$ and $b_i\in\Gamma$ for $i=1, 2, \cdots, k$, and $\eee$ is the empty string.\footnote{\citeauthor{Sburlan2011}'s \citeyearpar{Sburlan2011} original definition has been modified in order to include the case in which both $w$ and $v$ are empty strings.} 

\begin{example}
\label{exmp1}
Let $w=\str{abcde}$ and $v=\str{dduud}$. Then,
\begin{align*}
h(w, v)&=f(f(f(f(f(\eee, \str{a}, \str{d}), \str{b}, \str{d}), \str{c}, \str{u}), \str{d}, \str{u}), \str{e}, \str{d}),\\
       &=f(f(f(f(\str{a}, \str{b}, \str{d}), \str{c}, \str{u}), \str{d}, \str{u}), \str{e}, \str{d}),\\
       &=f(f(f(\str{ab}, \str{c}, \str{u}), \str{d}, \str{u}), \str{e}, \str{d}),\\  
       &=f(f(\str{cab}, \str{d}, \str{u}), \str{e}, \str{d}),\\  
       &=f(\str{dcab}, \str{e}, \str{d}),\\  
       &=\str{dcabe}.  
\end{align*}

The computation of $h(w, v)$ is represented graphically in Fig.\ \ref{folding}. As shown there, each application of function $f$ may be regarded as a folding operation that arranges the symbols of $w$ in a stack. Strings over $\Gamma$ describe how each folding must be performed, where symbol \str{u} represents a ``folding up'' action and symbol \str{d} represents a ``folding down'' action. The final result is the created stack, read from top to bottom.

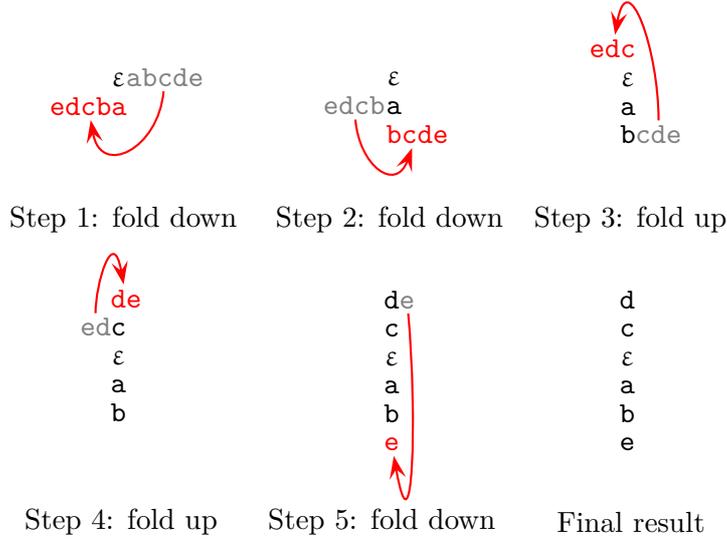
\begin{figure}
\centering
\begin{pspicture}(0,1)(10,8)
%\showgrid
\uput[0](0.4,6.7){
\begin{psmatrix}[colsep=0cm,rowsep=-.1cm]
\str{\ \ \ \ }\eee & \rnode{n1}{\textcolor{Gray}{\str{abcde}}}\\
\rnode{n2}{\textcolor{Red}{\str{edcba}}}
\end{psmatrix}
\nccurve[nodesep=2pt,linecolor=Red,angleA=-95, angleB=-80, ncurv=1.5]{->}{n1}{n2}
}
\uput[0](0,5.){Step 1: fold down}

\uput[0](4.,6.49){
\begin{psmatrix}[colsep=0.02cm,rowsep=-.1cm]
&\eee\str{\ \ \ \ }\\
\rnode{n2}{\textcolor{Gray}{\str{edcb}}}& \str{a\ \ \ \ }\\
& \rnode{n3}{\textcolor{Red}{\str{bcde}}}
\end{psmatrix}
\nccurve[nodesep=2pt,linecolor=Red,angleA=-85, angleB=-110, ncurv=1.5]{->}{n2}{n3}
}
\uput[0](3.5,5.){Step 2: fold down}

\uput[0](7.5,6.7){
\begin{psmatrix}[colsep=0cm,rowsep=-.1cm]
\rnode{n4}{\textcolor{Red}{\str{edc}}}\\
\str{\ \ }\eee\\
\str{\ \ a}\\
\str{\ \ b} & \rnode{n3}{\textcolor{Gray}{\str{cde}}}
\end{psmatrix}
\nccurve[nodesep=2pt,linecolor=Red,angleA=90, angleB=80, ncurv=1.5]{->}{n3}{n4}
}
\uput[0](6.9,5.){Step 3: fold up}

\uput[0](.8,3.2){
\begin{psmatrix}[colsep=0cm,rowsep=-.1cm]
                                  &\rnode{n5}{\textcolor{Red}{\str{de}}}\\
\rnode{n4}{\textcolor{Gray}{\str{ed}}}&\str{c\ \ }\\
                                  &\eee\str{\ \ }\\
                                  &\str{a\ \ }\\
                                  &\str{b\ }
\end{psmatrix}
\nccurve[nodesep=2pt,linecolor=Red,angleA=90, angleB=100, ncurvA=1.5, ncurvB=3]{->}{n4}{n5}
}
\uput[0](0.2,1){Step 4: fold up}

\uput[0](4.8,3){
\begin{psmatrix}[colsep=0cm,rowsep=-.1cm]
\str{d} & \rnode{n5}{\textcolor{Gray}{\str{e}}}\\
\str{c}\\
\eee\\
\str{a}\\
\str{b}\\
\rnode{n6}{\textcolor{Red}{\str{e}}}																	
\end{psmatrix}
\nccurve[nodesep=2pt,linecolor=Red,angleA=-85, angleB=-80, ncurv=1.5]{->}{n5}{n6}
}
\uput[0](3.4,1){Step 5: fold down}
\uput[0](7.9,3){
\begin{psmatrix}[colsep=0cm,rowsep=-.1cm]
\str{d}\\
\str{c}\\
\eee\\
\str{a}\\
\str{b}\\
\str{e}																	
\end{psmatrix}
}
\uput[0](7.2,1.){Final result}
\end{pspicture} 

\caption{Computation of $h(\str{abcde}, \str{dduud})=\str{dcabe}$ as a sequence of folding operations.} 
\label{folding}
\end{figure}

\end{example}

Another way to see the folding operation is illustrated in Fig.~\ref{folding2}. The folded string may be written as
\begin{equation}
h(w, v) = w_\str{u}^\mathcal{R}w_\str{d},
\label{2way}
\end{equation} 
where $w_\str{u}$ is the sequence of symbols in $w$ that are folded up, $\mathcal{R}$ denotes the reverse order operator, and $w_\str{d}$ is the sequence of symbols in $w$ that are folded down. Letting $w=xy$, $v=st$, with $|x|=|s|$ and $|y|=|t|$, we obtain the identity
\begin{equation}
h(xy, st)  = y_\str{u}^\mathcal{R}h(x, s)y_\str{d}.
\label{prop1}
\end{equation} 

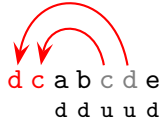
\begin{figure}
\centering
\begin{pspicture}(5,1.5)(7.5,4)
%\showgrid
\uput[0](5,2){
\begin{psmatrix}[colsep=.1cm,rowsep=-.1cm]
\rnode{n1}{\textcolor{Red}{\str{d}}} & \rnode{n2}{\textcolor{Red}{\str{c}}}
& \str{a} & \str{b} & \rnode{n3}{\textcolor{Gray}{\str{c}}} & \rnode{n4}{\textcolor{Gray}{\str{d}}} & \str{e}\\
&&\footnotesize\str{d}&\footnotesize\str{d}&\footnotesize\str{u}&\footnotesize\str{u}&\footnotesize\str{d}
\end{psmatrix}
}
\nccurve[nodesep=2pt,linecolor=Red,angleA=100, angleB=80, ncurv=1.5]{->}{n4}{n1}
\nccurve[nodesep=4pt,linecolor=Red,angleA=100, angleB=80, ncurv=1.5]{->}{n3}{n2}
\end{pspicture} 
\caption{Computation of $h(\str{abcde}, \str{dduud})=\str{dcabe}$ using Eq.~(\ref{2way}).} 
\label{folding2}
\end{figure}

A folding system (F-system) is defined as a pair $\Phi=(L_1, L_2)$, where $L_1\subseteq \Sigma^*$ is the core language, and $L_2\subseteq \Gamma^*$ is the folding procedure language. The language of $\Phi$ is
\begin{equation}
L(\Phi)=\{h(w, v)|w\in L_1, v\in L_2, |w|=|v|\}. 
\end{equation}

The class of all languages generated by F-systems with core languages of a class $\mathcal{C}$ and folding procedure languages of a class $\mathcal{H}$ is defined as
\begin{equation}
\mathcal{F}(\mathcal{C}, \mathcal{H})=\{L(\Phi)|\Phi=(L_1, L_2), L_1\in\mathcal{C}, L_2\in\mathcal{H}\}.
\end{equation}

\section{Pumping lemmas}
\label{pumping}

The first lemma states necessary conditions for a language to belong to class \linebreak $\mathcal{F}(\text{\sffamily REG}, \text{\sffamily REG})$, where  \text{\sffamily REG} is the class of regular languages.

\begin{lemma}
Let $L \in \mathcal{F}(\text{\sffamily REG}, \text{\sffamily REG})$ be an infinite language over an alphabet $\Sigma$. Then, there are strings $u, v, x, y, z \in \Sigma^*$, with $|vy|>0$, such that $uv^ixy^iz\in L$ for all $i\ge 0$.\footnote{\citeauthor{Sburlan2011}'s \citeyearpar{Sburlan2011} original version of the lemma states that there exists a positive constant $p$ such that any string $w\in L$, with $|w| \ge p$, can be written as $w =uvxyz$ satisfying $uv^ixy^iz\in L$ for each $i\ge 0$, $|vy| \ge 1$, and $|vxy|\le p$. In his proof, he set $p=\max(p_1, p_2)$ where $p_1$ and $p_2$ are the pumping lengths for the core and the folding procedure languages. However, it can be shown that such a value of $p$ does not always work. As a simple example, set $\Phi=(L_1, L_2)$ with $L_1=\str{aaaab}^*$ and $L_2=(\str{uu})^*\str{ddd}$. Then, $p=5$. However, $L(\Phi)= \str{aaaab} \cup (\str{bb})^*\str{aaaabbb}$, and note that string \str{aaaab} has length 5 but it can not be pumped as indicated by the lemma. Although the original lemma may still hold for some other value of $p$ ($p=9$, in case of the example, with $u=x=y=\eee$, $v=\str{bb}$ and $z=\str{aaaabbb}$), a full demonstration was not provided. 

Other inconsistencies have been detected in the proof. For example, the proof is based on constructing a double stranded structure $\left[\frac{r_j}{s_j}\right]$, and the technique is also used here. However, instead of Eqs.~(\ref{rj}) and (\ref{sj}), the previous proof sets $r_j = x_ry_r^{\frac{\lcm(|y_r|, |y_s|)}{|y_r|}j}z_r$ and
$s_j = x_sy_s^{\frac{\lcm(|y_r|, |y_s|)}{|y_s|}j}z_s$, where $\lcm$ denotes the least common multiple. Since the stranded construction requires $|r_j|=|s_j|$, it follows that $|x_rz_r|=|x_sz_s|$ and hence $|y_r|=|y_s|$. However, those conditions do not hold for arbitrary regular languages $L_1$ and $L_2$ (note that, in the above example, $|y_r|=|\str{b}|=1$ whereas $|y_s|=|\str{uu}|=2$).

Similar problems have been found also in the original version of the lemma for the case of $L\in\mathcal{F}(\text{\sffamily CF}, \text{\sffamily REG})$.\label{fsbu}}
\label{lemma1}  
\end{lemma}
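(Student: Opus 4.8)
The plan is to avoid pumping the core and folding-procedure languages separately — the source of the difficulties noted above — and instead pump a single \emph{aligned} language that processes both strings in parallel, so that equal lengths are preserved automatically. Let $A_1$ and $A_2$ be DFAs for $L_1$ and $L_2$, and form their product automaton over the alphabet $\Sigma\times\Gamma$, running $A_1$ and $A_2$ in lockstep. The regular language $M$ it accepts consists precisely of the strings $(a_1,b_1)\cdots(a_k,b_k)$ with $a_1\cdots a_k\in L_1$ and $b_1\cdots b_k\in L_2$; its two coordinate projections $\proj_1$ and $\proj_2$ recover exactly the admissible pairs $(w,v)$ with $|w|=|v|$ that define $L(\Phi)$. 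Since $(w,v)\mapsto h(w,v)$ maps $M$ onto $L$, the hypothesis that $L$ is infinite forces $M$ to be infinite.

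Next I would apply the ordinary pumping lemma for regular languages to a sufficiently long member of $M$, obtaining a factorization $\pi\kappa\lambda\in M$ with $|\kappa|>0$ and $\pi\kappa^i\lambda\in M$ for all $i\ge 0$. Writing $x_r=\proj_1(\pi)$, $y_r=\proj_1(\kappa)$, $z_r=\proj_1(\lambda)$ and $x_s,y_s,z_s$ for the $\Gamma$-projections, this gives $w_i=x_ry_r^iz_r\in L_1$ and $v_i=x_sy_s^iz_s\in L_2$. The payoff of the product construction is that parallel reading forces $|x_r|=|x_s|$, $|y_r|=|y_s|>0$ and $|z_r|=|z_s|$; hence $|w_i|=|v_i|$ for every $i$, with no least-common-multiple adjustment, and therefore $h(w_i,v_i)\in L$ for all $i\ge 0$.

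Finally I would evaluate $h(w_i,v_i)$ through the characterization in Eq.~(\ref{2way}). Folding $x_r$, the $i$ identical copies of $y_r$, and $z_r$ according to the matching blocks of $v_i$, the up-subsequence reads $X_uY_u^iZ_u$ and the down-subsequence reads $X_dY_d^iZ_d$, where $X_u,Y_u,Z_u$ and $X_d,Y_d,Z_d$ collect the up- and down-folded symbols within each block. Equation~(\ref{2way}) then yields
\[
h(w_i,v_i)=\bigl(X_uY_u^iZ_u\bigr)^{\mathcal{R}}X_dY_d^iZ_d=Z_u^{\mathcal{R}}\bigl(Y_u^{\mathcal{R}}\bigr)^iX_u^{\mathcal{R}}X_dY_d^iZ_d,
\]
so the choice $u=Z_u^{\mathcal{R}}$, $v=Y_u^{\mathcal{R}}$, $x=X_u^{\mathcal{R}}X_d$, $y=Y_d$, $z=Z_d$ puts the string in the required form $uv^ixy^iz$. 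The condition $|vy|>0$ holds because $|v|+|y|=|Y_u|+|Y_d|=|y_r|>0$, every symbol of $y_r$ being folded either up or down.

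I expect the main obstacle to be the synchronization that derailed the original argument: keeping the two pumped segments of equal length. The product-automaton idea is precisely what removes it, by forcing $|y_r|=|y_s|$ at the outset. After that, the only delicate point is the reversal in Eq.~(\ref{2way}), which splits the single pumped block $\kappa$ into two factors $Y_u^{\mathcal{R}}$ and $Y_d$ sent to opposite sides of the stack — exactly why the conclusion takes the separated two-block form $uv^ixy^iz$ and why the original bound $|vxy|\le p$ cannot be recovered, giving the weaker statement asserted here.
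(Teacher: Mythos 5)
Your proposal is correct, but it takes a genuinely different route from the paper's. The paper pumps $L_1$ and $L_2$ \emph{separately} and then repairs the resulting misalignment by hand: it inflates the pumped exponents to $|y_s|j+1$ and $|y_r|j+1$ in Eqs.~(\ref{rj}) and (\ref{sj}), so that both strands grow by exactly $|y_r||y_s|$ symbols per step, and then re-blocks the overlapping repetitions into common-length factors $\xi_2$ and $\mu_2$ of length $|y_r||y_s|$ (Eqs.~(\ref{ri1}) and (\ref{si1})), choosing boundary strings $\alpha_1,\alpha_2,\beta_1,\beta_2$ via Eq.~(\ref{ab}) and a threshold $j_0$ large enough to make $|\xi_k|=|\mu_k|$ --- precisely the bookkeeping your construction eliminates. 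Your product automaton over $\Sigma\times\Gamma$ builds the length synchronization into the language $M$ being pumped, so one application of the regular pumping lemma yields $|x_r|=|x_s|$, $|y_r|=|y_s|>0$, $|z_r|=|z_s|$ at the outset (your surjectivity argument that $M$ is infinite is sound, since $|h(w,v)|=|w|$), and your final step --- splitting the single pumped block into $Y_u^{\mathcal{R}}$ and $Y_d$ sent to opposite sides via Eq.~(\ref{2way}) --- coincides with the paper's Eq.~(\ref{h1}). As for what each approach buys: yours is shorter and removes exactly the alignment step where \citet{Sburlan2011}'s original argument went wrong (the $\lcm$ construction criticized in footnote \ref{fsbu}), and it would even extend to the $\mathcal{F}(\text{\sffamily CF}, \text{\sffamily REG})$ case of Lemma 2, since the synchronized pairing of a context-free language with a regular one is still context-free (PDA--DFA product). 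The paper's heavier double-strand technique is chosen instead because it applies uniformly to all the class combinations, including $\mathcal{F}(\text{\sffamily CF}, \text{\sffamily CF})$ in Lemma 3, where your construction breaks down: the synchronized pairing of two context-free languages need not be context-free (it is essentially an intersection), so there is no single language to which a pumping lemma could be applied.
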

\begin{proof}
Let $L_1\subseteq\Sigma^*$, $L_2\subseteq\{\str{u}, \str{d}\}^*$ and define the F-system $\Phi=(L_1, L_2)$. If $L=L(\Phi)$ is infinite, then $L_1$ and $L_2$ are also infinite. Set $p=\max(p_1, p_2)$, where $p_1$ and $p_2$ are pumping lengths for $L_1$ and $L_2$ (from the pumping lemma for regular languages), respectively, and choose any string $w\in L$ such that $|w|\ge p$. Then, there are strings $r\in L_1$ and $s\in L_2$ such that $w=h(r, s)$, with $|r|=|s|=|w|\ge p$. 

According to the pumping lemma for regular languages, $r$ may be written as $r=x_ry_rz_r$, with  $|y_r|>0$, such that  $x_ry_r^jz_r\in L_1$ for any $j\ge 0$. Similarly, $s$ may be written as $s=x_sy_sz_s$, with $|y_s|>0$, such that $x_sy_s^jz_s\in L_2$ for any $j\ge0$.

Consider the sequences $\{r_j\}$ and $\{s_j\}$ defined by 
\begin{align}
r_j& = x_ry_r^{|y_s|j+1}z_r,\label{rj}\\
s_j& = x_sy_s^{|y_r|j+1}z_s,\label{sj}
\end{align}
for any $j\ge 0$. Clearly, $r_j\in L_1$, $s_j\in L_2$; further, $|r_j|=|s_j|$ and so each of the strings in $\{r_j\}$ may be folded according with the corresponding string in $\{s_j\}$. For clarity, let us arrange $r_j$ and $s_j$ in the form of a double stranded structure $\left[\frac{r_j}{s_j}\right]$, illustrated with an example in Fig.~\ref{strand}. Note that, for $j$ large enough, repetitions of substring $y_r$ overlap with repetitions of substring $y_s$. Therefore, we may seek expressions of $r_j$ and $s_j$ of the form   
\begin{align}
r_j &=\xi_1\xi_2^{j-j_0}\xi_3, \label{ri1}\\
s_j &=\mu_1\mu_2^{j-j_0}\mu_3, \label{si1}
\end{align}
with substrings $\xi_k$ and $\mu_k$ such that $|\xi_k|=|\mu_k|$ for $k=1, 2, 3$, and for $j\ge j_0$, where $j_0$ is a constant to be determined.

\begin{figure}
\centering
\begin{pspicture}(-2,1)(11.5,4.5)
%\showgrid
\uput[0](-1.75,2.67){
$\begin{bmatrix}
r_j\\
s_j
\end{bmatrix}
=
\begin{bmatrix}
\hspace*{11.1cm}\\
\ 
\end{bmatrix}
$
}

\psframe[linecolor=fixed, fillstyle=solid,fillcolor=fixed](0,2.7)(10.9,3.15)
\psframe[linecolor=iter1, fillstyle=solid,fillcolor=iter1](1.3,2.7)(9.5,3.15)
\psframe[linecolor=iter2, fillstyle=solid,fillcolor=iter2](1.7,2.7)(8.3,3.15)

\psframe[linecolor=fixeda, fillstyle=solid,fillcolor=fixeda](0,2.15)(10.9,2.55)
\psframe[linecolor=iter1a, fillstyle=solid,fillcolor=iter1a](1.7,2.15)(9.9,2.55)
\psframe[linecolor=iter2a, fillstyle=solid,fillcolor=iter2a](1.7,2.15)(8.3,2.55)

\psline[linewidth=.5pt](-1.28,2.65)(-.95,2.65)
\psline[linewidth=.5pt](0,2.65)(10.9,2.65)
\uput[90](0,2.55){$|$}
\uput[90](.65,2.6){$x_r$}
\uput[90](1.3,2.55){$|$}
\uput[90](1.7,2.6){$y_r$}
\uput[90](2.1,2.55){$|$}
\uput[90](2.5,2.6){$y_r$}
\uput[90](2.9,2.55){$|$}
\uput[90](3.3,2.6){$y_r$}
\uput[90](3.7,2.55){$|$}
\uput[90](4.1,2.6){$y_r$}
\uput[90](4.5,2.55){$|$}
\uput[90](4.9,2.6){$y_r$}
\uput[90](5.3,2.55){$|$}

\uput[90](6.3,2.55){$|$}
\uput[90](6.7,2.6){$y_r$}
\uput[90](7.1,2.55){$|$}
\uput[90](7.5,2.6){$y_r$}
\uput[90](7.9,2.55){$|$}
\uput[90](8.3,2.6){$y_r$}
\uput[90](8.7,2.55){$|$}
\uput[90](9.1,2.6){$y_r$}
\uput[90](9.5,2.55){$|$}
\uput[90](10.2,2.6){$z_r$}
\uput[90](10.9,2.55){$|$}

\uput[90](0,2){$|$}
\uput[90](.85,2.05){$x_s$}
\uput[90](1.7,2){$|$}
\uput[90](1.97,2.05){$y_s$}
\uput[90](2.23,2){$|$}
\uput[90](2.5,2.05){$y_s$}
\uput[90](2.77,2){$|$}
\uput[90](3.03,2.05){$y_s$}
\uput[90](3.3,2){$|$}
\uput[90](3.57,2.05){$y_s$}
\uput[90](3.83,2){$|$}
\uput[90](4.1,2.05){$y_s$}
\uput[90](4.37,2){$|$}
\uput[90](4.63,2.05){$y_s$}
\uput[90](4.9,2){$|$}

\uput[90](5.8,2.1){$\cdots$}
\uput[90](5.8,2.6){$\cdots$}
\uput[90](5.8,1.1){$\cdots$}
\uput[90](5.8,3.7){$\cdots$}

\uput[90](6.7,2){$|$}
\uput[90](6.97,2.05){$y_s$}
\uput[90](7.23,2){$|$}
\uput[90](7.5,2.05){$y_s$}
\uput[90](7.76,2){$|$}
\uput[90](8.03,2.05){$y_s$}
\uput[90](8.3,2){$|$}
\uput[90](8.57,2.05){$y_s$}
\uput[90](8.83,2){$|$}
\uput[90](9.1,2.05){$y_s$}
\uput[90](9.37,2){$|$}
\uput[90](9.63,2.05){$y_s$}
\uput[90](9.9,2){$|$}
\uput[90](10.4,2.05){$z_s$}
\uput[90](10.9,2){$|$}

\psbrace[fillcolor=gray,rot=-90,ref=bC](1.7,3.2)(0,3.2){$\xi_1$}
\psbrace[fillcolor=gray,rot=-90,ref=bC](3.3,3.2)(1.7,3.2){$\xi_2$}
\psbrace[fillcolor=gray,rot=-90,ref=bC](4.9,3.2)(3.3,3.2){$\xi_2$}
\psbrace[fillcolor=gray,rot=-90,ref=bC](8.3,3.2)(6.7,3.2){$\xi_2$}
\psbrace[fillcolor=gray,rot=-90,ref=bC](10.9,3.2)(8.3,3.2){$\xi_3$}

\psbrace[fillcolor=gray,rot=90,ref=tC](0,2.1)(1.7,2.1){$\mu_1$}
\psbrace[fillcolor=gray,rot=90,ref=tC](1.7,2.1)(3.3,2.1){$\mu_2$}
\psbrace[fillcolor=gray,rot=90,ref=tC](3.3,2.1)(4.9,2.1){$\mu_2$}
\psbrace[fillcolor=gray,rot=90,ref=tC](6.7,2.1)(8.3,2.1){$\mu_2$}
\psbrace[fillcolor=gray,rot=90,ref=tC](8.3,2.1)(10.9,2.1){$\mu_3$}

\end{pspicture}

\caption{Double stranded structure built with strings $r_j\in L_1$ and $s_j\in L_2$, in the case in which both $L_1$ and $L_2$ are regular languages.}
\label{strand}
\end{figure}
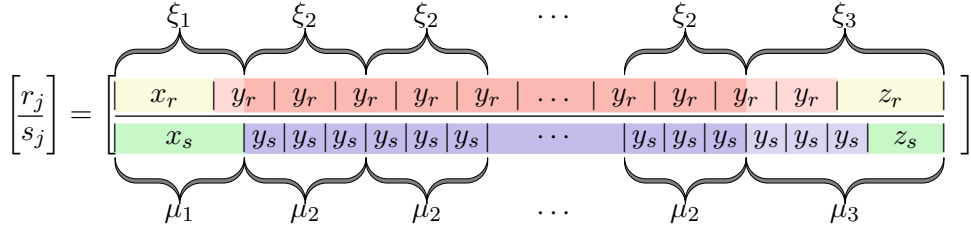

In Eqs.~(\ref{rj}) and (\ref{sj}), the iterated portions may be decomposed as
\begin{align}
 y_r^{|y_s|j+1} & =\alpha_1\xi_2^{j-j_0}\alpha_2,\\
 y_s^{|y_r|j+1} & =\beta_1\mu_2^{j-j_0}\beta_2,
\end{align}
with 
\begin{align}
\alpha_1\alpha_2 &= y_r^{j_0|y_s|+1},\label{aux1}\\
\beta_1\beta_2 &= y_s^{j_0|y_r|+1}.\label{aux2}
\end{align}
Consequently, $\xi_2$ and $\mu_2$ will be formed by repetitions of strings $y_r$ and $y_s$, respectively, with 
\begin{align}
|\xi_2^{j-j_0}| & = |y_r^{|y_s|j+1}| - |\alpha_1\alpha_2|,\nonumber \\
               & = |y_r|(|y_s|j+1) - |y_r|(j_0|y_s|+1) \nonumber\\
               & = |y_r||y_s|(j-j_0),
\end{align}
and the same result for $|\mu_2^{j-j0}|$. Consequently, $|\xi_2|=|\mu_2|=|y_r||y_s|$.

Next, we set $\xi_1 =x_r\alpha_1$, $\xi_3 =\alpha_2z_r$, $\mu_1 =x_s\beta_1$, $\mu_3 =\beta_2z_s$. 
Substrings $\alpha_1$, $\alpha_2$, $\beta_1$, $\beta_2$ are computed so that $|\xi_1|=|\mu_1|$ and $|\xi_3|=|\mu_3|$, with 
\begin{equation}
\left\{
\begin{array}{l}
\beta_1=\eee, |\alpha_1|= |x_s|-|x_r|,  \text{ if } |x_s|\ge|x_r|,\\
\alpha_1=\eee, |\beta_1|= |x_r|-|x_s|,  \text{ if } |x_s|<|x_r|.  
\end{array}
\right.\label{ab}
\end{equation}   
and using Eqs.\ (\ref{aux1}) and (\ref{aux2}). During the calculations, constant $j_0$ is chosen large enough so as to produce $|\alpha_2|\ge 0$ and $|\beta_2|\ge 0$.  
result.

Finally, $h(r_j, s_j)$ is computed by using Eqs.\ (\ref{ri1}), (\ref{si1}), and the identity in Eq.\ (\ref{prop1}). Note that $\xi_1$, $\xi_2$ and $\xi_3$ are folded as determined by $\mu_1$, $\mu_2$, and $\mu_3$, respectively. Then, letting $i=j-j_0$,
\begin{equation}
h(\xi_1\xi_2^i\xi_3, \mu_1\mu_2^i\mu_3)  = (\xi_3)_\str{u}^\mathcal{R}[(\xi_2)_\str{u}^\mathcal{R}]^i(\xi_1)_\str{u}^\mathcal{R}(\xi_1)_\str{d}(\xi_2)_\str{d}^i(\xi_3)_\str{d}.
\label{h1}
\end{equation}
The condition stated by the lemma is obtained from Eq.~(\ref{h1}) with $u=(\xi_3)_\str{u}^\mathcal{R}$, $v=(\xi_2)_\str{u}^\mathcal{R}$, $x=(\xi_1)_\str{u}^\mathcal{R}(\xi_1)_\str{d}$, $y=(\xi_2)_\str{d}$, $z=(\xi_3)_\str{d}$. Also, $|vy|=|\xi_2|=|y_r||y_s|>0$. 

\end{proof}

The next lemma considers the two cases in which the core and the folding procedure languages belong to different classes, regular or context-free. Its demonstration follows similar steps as in Lemma 1. 

\begin{lemma}
Let $L \in \mathcal{F}(\text{\sffamily  CF}, \text{\sffamily REG})$ or $L \in \mathcal{F}(\text{\sffamily  REG}, \text{\sffamily CF})$  be an infinite language over an alphabet $\Sigma$. Then, there are strings $w_1, w_2, \ldots, w_9\in \Sigma^*$, with  $|w_2w_4w_6w_8| >0$, such that $w_1w_2^iw_3w_4^iw_5w_6^iw_7w_8^iw_9\in L$ for all $i\ge 0$. 
\end{lemma}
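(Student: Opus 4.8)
The plan is to mirror the structure of Lemma~\ref{lemma1}, replacing the regular pumping lemma for the context-free component with the context-free pumping lemma. Starting from an infinite language $L=L(\Phi)$ with $\Phi=(L_1,L_2)$, both $L_1$ and $L_2$ must be infinite, and I pick $w\in L$ with $|w|$ at least $\max(p_1,p_2)$, writing $w=h(r,s)$ with $r\in L_1$, $s\in L_2$, $|r|=|s|=|w|$. Without loss of generality I treat the case $L\in\mathcal{F}(\text{\sffamily CF},\text{\sffamily REG})$; the reversed case $\mathcal{F}(\text{\sffamily REG},\text{\sffamily CF})$ is symmetric because $h$ folds $r$ and $s$ together on equal footing (the roles of the two strands in the double-stranded construction are interchangeable). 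Applying the context-free pumping lemma to $r$ gives a decomposition $r=x_r y_r z_r w_r v_r$ with $|y_r v_r|>0$ and $x_r y_r^k z_r w_r^k v_r\in L_1$ for all $k\ge 0$; applying the regular pumping lemma to $s$ gives $s=x_s y_s z_s$ with $|y_s|>0$ and $x_s y_s^k z_s\in L_2$ for all $k\ge 0$.

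The core difficulty, and the reason the statement now has \emph{four} pumpable blocks $w_2,w_4,w_6,w_8$ instead of two, is length-synchronization: the context-free side contributes two independently iterated substrings $y_r$ and $v_r$, whose lengths need not be equal, while the regular side contributes a single iterated substring $y_s$. To fold $r$ against $s$ I need strings of equal length, so as in Lemma~\ref{lemma1} I will build synchronized sequences. The natural choice is
\begin{align}
r_j &= x_r\, y_r^{|y_s|j+1}\, z_r\, w_r^{|y_s|j+1}\, v_r, \label{rjL2}\\
s_j &= x_s\, y_s^{(|y_r|+|v_r|)j+1}\, z_s, \label{sjL2}
\end{align}
so that $|r_j|=|s_j|$ for every $j\ge 0$ (the total length pumped into the $L_1$ strand per unit increase of $j$ is $|y_s|(|y_r|+|v_r|)$, matching the $L_2$ strand). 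Both $r_j\in L_1$ and $s_j\in L_2$ hold by the respective pumping lemmas, so $h(r_j,s_j)$ is defined.

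Next I arrange $\left[\frac{r_j}{s_j}\right]$ as a double-stranded structure and seek aligned decompositions into blocks of equal length on the two strands. Because the top strand has two separated iteration zones (around $y_r$ and around $v_r$) while the bottom strand has one long iteration zone of $y_s$, a faithful alignment produces three iterated block-pairs on each strand — conceptually the $y_r$-zone, the segment between the two zones (which on the bottom strand is still inside the $y_s$-run), and the $v_r$-zone — together with fixed end-blocks, for a total decomposition $r_j=\xi_1\xi_2^{\,i}\xi_3\xi_4^{\,i}\xi_5\xi_6^{\,i}\xi_7$ and $s_j=\mu_1\mu_2^{\,i}\mu_3\mu_4^{\,i}\mu_5\mu_6^{\,i}\mu_7$ with $i=j-j_0$, $|\xi_k|=|\mu_k|$ for each $k$, and $j_0$ chosen large enough that all fixed remainders have nonnegative length (exactly as $j_0$ was fixed in Lemma~\ref{lemma1} via the boundary equations analogous to Eq.~(\ref{ab})). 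Each iterated block has length a fixed multiple of $|y_r||y_s|$ or $|v_r||y_s|$, so at least one is nonempty since $|y_r v_r|>0$.

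Finally I compute $h(r_j,s_j)$ block-by-block using the identity Eq.~(\ref{prop1}) repeatedly, exactly as Eq.~(\ref{h1}) was obtained, peeling the reversed up-folded parts of the outer blocks to the left and appending the down-folded parts to the right. The three iterated block-pairs each split into an up-contribution and a down-contribution, yielding four pumpable substrings in the final folded string: reading from the outside in, the $\xi_6/\xi_2$ up-parts and the $\xi_2/\xi_6$ down-parts (the precise pairing falls out of the nesting order of Eq.~(\ref{prop1})). Setting $w_1,\dots,w_9$ equal to the resulting fixed and iterated segments, with $w_2,w_4,w_6,w_8$ the four iterated pieces, gives $w_1w_2^iw_3w_4^iw_5w_6^iw_7w_8^iw_9=h(r_j,s_j)\in L$ for all $i\ge 0$, and $|w_2w_4w_6w_8|$ equals a positive multiple of $|y_s|\,|y_r v_r|>0$. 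I expect the main obstacle to be bookkeeping the alignment in the figure-level decomposition — confirming that the separation between the two context-free iteration zones lands cleanly inside the regular strand's $y_s$-run for all large $j$, so that the block boundaries $|\xi_k|=|\mu_k|$ can indeed be met — rather than any conceptual step; the folding computation itself is a mechanical application of Eq.~(\ref{prop1}).
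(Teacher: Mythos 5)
Your overall route is the paper's route: you synchronize the pumping exponents exactly as the paper does (your $r_j$, $s_j$ coincide with the paper's Eqs.~(\ref{rj2})--(\ref{sj2}) up to renaming), arrange the double strand, and fold with Eq.~(\ref{prop1}). But the alignment step --- which you yourself flag as the crux --- is misdescribed in a way that makes your displayed decomposition false. Your $r_j$ has exactly \emph{two} zones that grow with $j$ (the $y_r$-run and the $w_r$-run); the segment between them, containing your $z_r$ (the paper's $x_r$), has \emph{constant} length, so it cannot be an iterated block. Hence $r_j$ admits no nondegenerate decomposition $\xi_1\xi_2^{\,i}\xi_3\xi_4^{\,i}\xi_5\xi_6^{\,i}\xi_7$ with all three exponents growing; three iterated block-pairs is the structure of the $\mathcal{F}(\text{\sffamily CF},\text{\sffamily CF})$ case (the paper's Lemma 3, with its 13-string conclusion), not of this lemma. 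The correct alignment, as in the paper's Eqs.~(\ref{ri21})--(\ref{si21}), has five blocks with \emph{two} iterated pairs: $\xi_2$ made of repetitions of $v_r$ with $|\xi_2|=|v_r||y_s|$, and $\xi_4$ made of repetitions of $y_r$ with $|\xi_4|=|y_r||y_s|$, separated by a fixed block $\xi_3=\alpha_2x_r\beta_1$ that pairs with a constant-length run $\mu_3$ of $y_s$-repetitions. Your own endgame betrays the inconsistency: a genuine three-iterated-pair decomposition would yield \emph{six} pumped pieces in $h(r_j,s_j)$ (an up- and a down-part for each iterated core block), yet you correctly claim only four. There is also a bookkeeping slip: with your decomposition $r=x_ry_rz_rw_rv_r$ the pumped pieces are $y_r$ and $w_r$, so the nontriviality condition and the exponent in $s_j$ should read $|y_rw_r|>0$ and $(|y_r|+|w_r|)j+1$, not $|y_rv_r|>0$ and $(|y_r|+|v_r|)j+1$.

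Second, dismissing the $\mathcal{F}(\text{\sffamily REG},\text{\sffamily CF})$ case as ``symmetric'' because the strands are ``on equal footing'' is not valid: $h$ is asymmetric in its two arguments --- by Eq.~(\ref{2way}) the output consists of the core string's symbols only, with the procedure string merely dictating up/down --- so you cannot interchange the strands. What survives is that the length-alignment argument carries over, but it must be redone: with a regular core and context-free procedure, the core contributes a single $y_r$-run that must be \emph{split into two} iterated blocks $\xi_2,\xi_4$ (both consisting of $y_r$-repetitions) aligned against the two growing zones $v_s$ and $y_s$ of the procedure strand; the paper does exactly this in Eqs.~(\ref{rj3})--(\ref{sj3}), setting $\xi_1=x_r\alpha_1$, $\xi_3=\alpha_2$, $\xi_5=\alpha_3z_r$, $\mu_1=u_s\beta_1$, $\mu_3=\beta_2x_s\gamma_1$, $\mu_5=\gamma_2z_s$, after which the same folding formula Eq.~(\ref{h2}) gives the nine-string conclusion. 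Both defects are repairable without any new idea --- replace your seven-block display by the five-block one and write out the reversed case explicitly --- but as submitted the central decomposition is incorrect and one of the two cases in the statement is not actually proved.
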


\begin{proof}

Consider first the case $L \in \mathcal{F}(\text{\sffamily  CF}, \text{\sffamily REG})$ and proceed similarly as in Lemma 1, except that the pumping lemma for context-free languages is used for string $r\in L_1$. Thus, $r$ is written as $r=u_rv_rx_ry_rz_r$, with  $|v_ry_r|>0$ and $u_rv_r^jx_ry_r^jz_r\in L_1$, for any $j\ge 0$. Strings $r_j\in L_1$ and $s_j\in L_2$ are now defined as
\begin{align}
r_j& = u_rv_r^{|y_s|j+1}x_ry_r^{|y_s|j+1}z_r,\label{rj2}\\
s_j& = x_sy_s^{|v_ry_r|j+1}z_s,\label{sj2}
\end{align}
for any $j\ge 0$, and their double stranded arrangement is illustrated in Fig.~\ref{strand2}.  

\begin{figure}
\centering
\begin{pspicture}(-2,1)(12,4.5)
%\showgrid
\uput[0](-1.75,2.67){
$\begin{bmatrix}
r_j\\
s_j
\end{bmatrix}
=
\begin{bmatrix}
\hspace*{11.8cm}\\
\ 
\end{bmatrix}
$
}

\psframe[linecolor=fixed, fillstyle=solid,fillcolor=fixed](0,2.7)(11.6,3.15)
\psframe[linecolor=iter1, fillstyle=solid,fillcolor=iter1](1,2.7)(5.2,3.15)
\psframe[linecolor=iter1, fillstyle=solid,fillcolor=iter1](5.7,2.7)(9.9,3.15)
\psframe[linecolor=iter2, fillstyle=solid,fillcolor=iter2](1.3,2.7)(4.9,3.15)
\psframe[linecolor=iter2, fillstyle=solid,fillcolor=iter2](6.3,2.7)(9.9,3.15)

\psframe[linecolor=fixeda, fillstyle=solid,fillcolor=fixeda](0,2.15)(11.6,2.55)
\psframe[linecolor=iter1a, fillstyle=solid,fillcolor=iter1a](1.3,2.15)(10.5,2.55)
\psframe[linecolor=iter2a, fillstyle=solid,fillcolor=iter2a](1.3,2.15)(4.9,2.55)
\psframe[linecolor=iter2a, fillstyle=solid,fillcolor=iter2a](6.3,2.15)(9.9,2.55)

\uput[90](0,2.55){$|$}
\uput[90](.5,2.6){$u_r$}
\uput[90](1,2.55){$|$}
\uput[90](1.3,2.6){$v_r$}
\uput[90](1.6,2.55){$|$}
\uput[90](1.9,2.6){$v_r$}
\uput[90](2.2,2.55){$|$}
\uput[90](2.5,2.6){$v_r$}
\uput[90](2.8,2.55){$|$}
\uput[90](3.1,2.6){$\cdots$}
\uput[90](3.1,3.7){$\cdots$}

\uput[90](3.4,2.55){$|$}
\uput[90](3.7,2.6){$v_r$}
\uput[90](4.0,2.55){$|$}
\uput[90](4.3,2.6){$v_r$}
\uput[90](4.6,2.55){$|$}
\uput[90](4.9,2.6){$v_r$}

\uput[90](5.2,2.55){$|$}
\uput[90](5.45,2.6){$x_r$}
\uput[90](5.7,2.55){$|$}

\uput[90](6.,2.6){$y_r$}
\uput[90](6.3,2.55){$|$}
\uput[90](6.6,2.6){$y_r$}
\uput[90](6.9,2.55){$|$}
\uput[90](7.2,2.6){$y_r$}
\uput[90](7.5,2.55){$|$}
\uput[90](8.1,2.6){$\cdots$}
\uput[90](8.1,3.7){$\cdots$}

\uput[90](8.7,2.55){$|$}
\uput[90](9.,2.6){$y_r$}
\uput[90](9.3,2.55){$|$}
\uput[90](9.6,2.6){$y_r$}
\uput[90](9.9,2.55){$|$}
\uput[90](10.8,2.6){$z_r$}
\uput[90](11.6,2.55){$|$}

\psline[linewidth=.5pt](-1.28,2.65)(-.95,2.65)
\psline[linewidth=.5pt](0,2.65)(11.6,2.65)

\uput[90](0,2){$|$}
\uput[90](0.65,2.05){$x_s$}
\uput[90](1.3,2){$|$}
\uput[90](1.5,2.05){\small $y_s$}
\uput[90](1.7,2){$|$}
\uput[90](1.9,2.05){\small $y_s$}
\uput[90](2.1,2){$|$}
\uput[90](2.3,2.05){\small $y_s$}
\uput[90](2.5,2){$|$}

\uput[90](3.1,1.1){$\cdots$}
\uput[90](3.1,2.1){$\cdots$}

\uput[90](3.7,2){$|$}
\uput[90](3.9,2.05){\small $y_s$}
\uput[90](4.1,2){$|$}
\uput[90](4.3,2.05){\small $y_s$}
\uput[90](4.5,2){$|$}
\uput[90](4.7,2.05){\small $y_s$}
\uput[90](4.9,2){$|$}
\uput[90](5.1,2.05){\small $y_s$}
\uput[90](5.3,2){$|$}
\uput[90](5.5,2.05){\small $y_s$}
\uput[90](5.7,2){$|$}
\uput[90](5.9,2.05){\small $y_s$}
\uput[90](6.1,2){$|$}
\uput[90](6.3,2.05){\small $y_s$}
\uput[90](6.5,2){$|$}
\uput[90](6.7,2.05){\small $y_s$}
\uput[90](6.9,2){$|$}
\uput[90](7.1,2.05){\small $y_s$}
\uput[90](7.3,2){$|$}
\uput[90](7.5,2.05){\small $y_s$}
\uput[90](7.7,2){$|$}

\uput[90](8.1,1.1){$\cdots$}
\uput[90](8.1,2.1){$\cdots$}

\uput[90](8.5,2){$|$}
\uput[90](8.7,2.05){\small $y_s$}
\uput[90](8.9,2){$|$}
\uput[90](9.1,2.05){\small $y_s$}
\uput[90](9.3,2){$|$}
\uput[90](9.5,2.05){\small $y_s$}
\uput[90](9.7,2){$|$}
\uput[90](9.9,2.05){\small $y_s$}
\uput[90](10.1,2){$|$}
\uput[90](10.3,2.05){\small $y_s$}
\uput[90](10.5,2){$|$}
\uput[90](10.85,2.05){$z_s$}
\uput[90](11.6,2){$|$}

\psbrace[fillcolor=gray,rot=-90,ref=bC](1.3,3.2)(0,3.2){$\xi_1$}
\psbrace[fillcolor=gray,rot=-90,ref=bC](2.5,3.2)(1.3,3.2){$\xi_2$}
\psbrace[fillcolor=gray,rot=-90,ref=bC](4.9,3.2)(3.7,3.2){$\xi_2$}
\psbrace[fillcolor=gray,rot=-90,ref=bC](6.3,3.2)(4.9,3.2){$\xi_3$}
\psbrace[fillcolor=gray,rot=-90,ref=bC](7.5,3.2)(6.3,3.2){$\xi_4$}
\psbrace[fillcolor=gray,rot=-90,ref=bC](9.9,3.2)(8.7,3.2){$\xi_4$}
\psbrace[fillcolor=gray,rot=-90,ref=bC](11.6,3.2)(9.9,3.2){$\xi_5$}

\psbrace[fillcolor=gray,rot=90,ref=tC](0,2.1)(1.3,2.1){$\mu_1$}
\psbrace[fillcolor=gray,rot=90,ref=tC](1.3,2.1)(2.5,2.1){$\mu_2$}
\psbrace[fillcolor=gray,rot=90,ref=tC](3.7,2.1)(4.9,2.1){$\mu_2$}
\psbrace[fillcolor=gray,rot=90,ref=tC](4.9,2.1)(6.3,2.1){$\mu_3$}
\psbrace[fillcolor=gray,rot=90,ref=tC](6.3,2.1)(7.5,2.1){$\mu_4$}
\psbrace[fillcolor=gray,rot=90,ref=tC](8.7,2.1)(9.9,2.1){$\mu_4$}
\psbrace[fillcolor=gray,rot=90,ref=tC](9.9,2.1)(11.6,2.1){$\mu_5$}

\end{pspicture}
\caption{Double stranded structure built with strings $r_j\in L_1$ and $s_j\in L_2$, in the case in which $L_1$ is a context-free language and $L_2$ is a regular language.}
\label{strand2}
\end{figure}

For $j$ large enough, repetitions of $y_s$ overlap with repetitions of $v_r$ (if $|v_r|>0$) and $y_r$ (if $|y_r|>0$).   
Then, we may seek expressions of $r_j$ and $s_j$ of the form 
\begin{align}
r_j &=\xi_1\xi_2^{j-j_0}\xi_3\xi_4^{j-j_0}\xi_5, \label{ri21}\\
s_j &=\mu_1\mu_2^{j-j_0}\mu_3\mu_4^{j-j_0}\mu_5, \label{si21}
\end{align}
with substrings $\xi_k$ and $\mu_k$ such that $|\xi_k|=|\mu_k|$ for $k=1, 2, \ldots, 5$, and for $j\ge j_0$, where $j_0$ is a constant to be determined.  

In Eqs.~(\ref{rj2}) and (\ref{sj2}), the iterated portions are decomposed as
\begin{align}
 v_r^{|y_s|j+1} & =\alpha_1\xi_2^{j-j_0}\alpha_2,\\
  y_r^{|y_s|j+1} & =\beta_1\xi_4^{j-j_0}\beta_2,\\
y_s^{|v_ry_r|j+1} & =\gamma_1\mu_2^{j-j_0}\gamma_2\mu_4^{j-j_0}\gamma_3,
\end{align}
with 
\begin{align}
\alpha_1\alpha_2  &= v_r^{j_0|y_s| + 1},\label{aux12}\\
\beta_1\beta_2 &= y_r^{j_0|y_s| + 1},\label{aux22}\\
 \gamma_1\gamma_2\gamma_3&= y_s^{j_0|v_ry_r| + 1},\label{aux32}
\end{align}
Consequently, $\xi_2$ and $\mu_2$ will be formed by repetitions of strings $v_r$ and $y_s$, respectively, with $|\xi_2|=|\mu_2|=|v_r||y_s|$. Also, $\xi_4$ and $\mu_4$ will be formed by repetitions of strings $y_r$ and $y_s$, respectively, with $|\xi_4|=|\mu_4|=|y_r||y_s|$.

Next, we set $\xi_1 =u_r\alpha_1$, $\xi_3 =\alpha_2x_r\beta_1$, $\xi_5 =\alpha_2z_r$, $\mu_1 =x_s\gamma_1$, $\mu_3=\gamma_2$, $\mu_5 =\gamma_3z_s$. 
Substrings $\alpha_1$, $\alpha_2$, $\beta_1$, $\beta_2$, $\gamma_1$, $\gamma_2$, and $\gamma_3$ are selected so that $|\xi_1=\mu_1|$, $|\xi_3=\mu_3|$, and $|\xi_5=\mu_5|$, with
\begin{equation}
\left\{
\begin{array}{l}
\gamma_1=\eee, |\alpha_1|= |x_s|-|u_r|,  \text{ if } |x_s|\ge|u_r|,\\
\alpha_1=\eee, |\gamma_1|= |u_r|-|x_s|,  \text{ if } |x_s|<|u_r|,  
\end{array}
\right.
\end{equation}
\begin{equation}
\left\{
\begin{array}{l}
\gamma_3=\eee, |\beta_2|= |z_s|-|z_r|,  \text{ if } |z_s|\ge|z_r|,\\
\beta_2=\eee, |\gamma_3|= |z_r|-|z_s|,  \text{ if } |z_s|<|z_r|,  
\end{array}
\right.
\end{equation}
and using Eqs.\ (\ref{aux12}) to (\ref{aux32}) for a value of $j_0$ large enough.

Finally, we compute $h(r_j, s_j)$ using Eqs.\ (\ref{ri21}), (\ref{si21}), and the identity in Eq.\ (\ref{prop1}). Then, letting $i=j-j_0$,
\begin{multline}
h(\xi_1\xi_2^i\xi_3\xi_4^i\xi_5, \mu_1\mu_2^i\mu_3\mu_4^i\mu_5)  =\\ (\xi_5)_\str{u}^\mathcal{R}[(\xi_4)_\str{u}^\mathcal{R}]^i(\xi_3)_\str{u}^\mathcal{R}[(\xi_2)_\str{u}^\mathcal{R}]^i(\xi_1)_\str{u}^\mathcal{R}(\xi_1)_\str{d}(\xi_2)_\str{d}^i(\xi_3)_\str{d}(\xi_4)_\str{d}^i(\xi_5)_\str{d}.
\label{h2}
\end{multline}

The condition stated by the lemma is obtained from Eq.~(\ref{h2}) with  $w_1=(\xi_5)_\str{u}^\mathcal{R}$, $w_2=(\xi_4)_\str{u}^\mathcal{R}$, $w_3=(\xi_3)_\str{u}^\mathcal{R}$, $w_4=(\xi_2)_\str{u}^\mathcal{R}$, $w_5=(\xi_1)_\str{u}^\mathcal{R}(\xi_1)_\str{d}$, $w_6=(\xi_2)_\str{d}$, $w_7=(\xi_3)_\str{d}$, $w_8=(\xi_4)_\str{d}$, $w_9=(\xi_5)_\str{d}$. Also, $|w_2w_4w_6w_8|=|\xi_2\xi_4|=|v_ry_r||y_s|>0$.

Consider next the case $L \in \mathcal{F}(\text{\sffamily  REG}, \text{\sffamily CF})$.
 This case is treated as the previous one,  except that $r$ and $s$ are decomposed following the pumping lemmas for regular languages and context-free languages, respectively. Thus, strings $r_j$ and $s_j$ are now defined as 
\begin{align}
r_j& = x_ry_r^{|v_sy_s|j+1}z_r,\label{rj3}\\
s_j& = u_sv_s^{|y_r|j+1}x_sy_s^{|y_r|j+1}z_s,\label{sj3}
\end{align}
with $|y_r|>0$,  $|v_sy_s|>0$, and for any $j\ge 0$.  

In Eqs.~(\ref{rj3}) and (\ref{sj3}), the iterated portions are decomposed as
\begin{align}
y_r^{|v_sy_r|s+1} & =\alpha_1\xi_2^{j-j_0}\alpha_2\xi_4^{j-j_0}\alpha_3,\\
 v_s^{|y_r|j+1} & =\beta_1\mu_2^{j-j_0}\beta_2,\\
  y_s^{|y_r|j+1} & =\gamma_1\mu_4^{j-j_0}\gamma_2,
\end{align}
with 
\begin{align}
 \alpha_1\alpha_2\alpha_3&= y_r^{j_0|v_sy_s| + 1},\\
\beta_1\beta_2  &= v_s^{j_0|y_r| + 1},\\
\gamma_1\gamma_2 &= y_s^{j_0|y_r| + 1}.
\end{align}

The demonstration then follows similar steps as in the previous case: $r_j$ and $s_j$ are expressed as in Eqs. (\ref{ri21}) and (\ref{si21}) with $\xi_1 =x_r\alpha_1$, $\xi_3 =\alpha_2$, $\xi_5 =\alpha_3z_r$, $\mu_1 =u_s\beta_1$, $\mu_3 =\beta_2x_s\gamma_1$,  $\mu_5 =\gamma_2z_s$. 

\end{proof}

The last lemma considers the case in which both the core and the folding procedure languages are context-free. It has a longer demonstration; nevertheless, it follows the same technique as the previous ones.

\begin{lemma}
Let $L \in \mathcal{F}(\text{\sffamily CF}, \text{\sffamily CF})$ be an infinite language over an alphabet $\Sigma$. Then, there are strings $w_1, w_2, \ldots, w_{13}\in \Sigma^*$, with  $|w_2w_4w_6\cdots w_{12}| >0$, such that \linebreak $w_1w_2^iw_3w_4^iw_5\cdots w_{11}w_{12}^iw_{13}\in L$ for all $i\ge 0$. 

\end{lemma}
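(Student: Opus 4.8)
The plan is to mimic the proofs of Lemmas 1 and 2, now applying the pumping lemma for context-free languages to \emph{both} strands. First I would note that $L=L(\Phi)$ infinite forces both $L_1$ and $L_2$ to be infinite, pick $w\in L$ with $|w|$ exceeding both context-free pumping lengths, and write $w=h(r,s)$ with $r\in L_1$, $s\in L_2$, $|r|=|s|=|w|$. Decomposing $r=u_rv_rx_ry_rz_r$ and $s=u_sv_sx_sy_sz_s$ with $|v_ry_r|>0$, $|v_sy_s|>0$ and the usual pumping guarantees, I would set $P=|v_ry_r|$, $Q=|v_sy_s|$ and define
\begin{align}
r_j &= u_rv_r^{Qj+1}x_ry_r^{Qj+1}z_r,\\
s_j &= u_sv_s^{Pj+1}x_sy_s^{Pj+1}z_s,
\end{align}
for $j\ge 0$. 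Then $r_j\in L_1$, $s_j\in L_2$, and since both strands gain length $PQ$ per unit increase of $j$ (the top by $(|v_r|+|y_r|)Q$, the bottom by $(|v_s|+|y_s|)P$), one has $|r_j|=|s_j|=|w|+PQj$, so the pair can be folded and arranged as a double stranded structure $\left[\frac{r_j}{s_j}\right]$ exactly as in Figs.~\ref{strand} and \ref{strand2}.

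The core of the argument is to read off, for $j$ large, the maximal blocks in which both strands are simultaneously inside an iterated region. The top strand has two iterated regions (the $v_r$-block and the $y_r$-block, separated by the fixed $x_r$) and the bottom strand has two (the $v_s$- and $y_s$-blocks, separated by $x_s$). I would track the region endpoints as functions of $j$: the left endpoints are constant, the right ends of both strands and the inner boundaries of the final blocks grow at the common rate $PQ$, while the top gap $x_r$ sits at a position growing at rate $|v_r|Q$ and the bottom gap $x_s$ at rate $|v_s|P$. Comparing these, the relative placement of the two gaps is governed by the sign of $|v_r|Q-|v_s|P=|v_r|\,|y_s|-|v_s|\,|y_r|$, which produces a short case split; in every case, for $j\ge j_0$ with $j_0$ large enough, both gaps lie strictly interior to the commonly-iterated region and carve it into \emph{three} jointly-pumpable sections. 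This yields decompositions
\begin{align}
r_j &= \xi_1\xi_2^{j-j_0}\xi_3\xi_4^{j-j_0}\xi_5\xi_6^{j-j_0}\xi_7,\\
s_j &= \mu_1\mu_2^{j-j_0}\mu_3\mu_4^{j-j_0}\mu_5\mu_6^{j-j_0}\mu_7,
\end{align}
where $\xi_2,\xi_4,\xi_6$ (resp.\ $\mu_2,\mu_4,\mu_6$) are whole numbers of copies of the relevant $v_r/y_r$ (resp.\ $v_s/y_s$) blocks, and the fixed sections $\xi_1,\xi_3,\xi_5,\xi_7$ collect the constant-length remainders $\alpha_k,\beta_k,\gamma_k$ obtained by peeling off a fixed number of blocks (depending only on $j_0$) from each iterated region. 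Choosing these remainders and $j_0$ as in Lemmas 1 and 2, I would enforce $|\xi_k|=|\mu_k|$ for $k=1,\dots,7$, with the pumped periods satisfying $|\xi_2|+|\xi_4|+|\xi_6|=PQ$.

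Finally, applying the folding identity~(\ref{prop1}) repeatedly, exactly as in Eqs.~(\ref{h1}) and (\ref{h2}), gives
\begin{multline}
h(\xi_1\xi_2^i\xi_3\xi_4^i\xi_5\xi_6^i\xi_7,\ \mu_1\mu_2^i\mu_3\mu_4^i\mu_5\mu_6^i\mu_7)=\\
(\xi_7)_\str{u}^\mathcal{R}[(\xi_6)_\str{u}^\mathcal{R}]^i(\xi_5)_\str{u}^\mathcal{R}[(\xi_4)_\str{u}^\mathcal{R}]^i(\xi_3)_\str{u}^\mathcal{R}[(\xi_2)_\str{u}^\mathcal{R}]^i(\xi_1)_\str{u}^\mathcal{R}(\xi_1)_\str{d}(\xi_2)_\str{d}^i(\xi_3)_\str{d}(\xi_4)_\str{d}^i(\xi_5)_\str{d}(\xi_6)_\str{d}^i(\xi_7)_\str{d},
\end{multline}
with $i=j-j_0$, from which the thirteen strings $w_1,\dots,w_{13}$ are read off ($w_1=(\xi_7)_\str{u}^\mathcal{R}$, the six pumped blocks being $w_2,w_4,w_6$ from the reversed up-parts of $\xi_6,\xi_4,\xi_2$ and $w_8,w_{10},w_{12}$ from the down-parts of $\xi_2,\xi_4,\xi_6$). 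Since the up- and down-parts of each $\xi_{2k}$ together have length $|\xi_{2k}|$, one gets $|w_2w_4w_6w_8w_{10}w_{12}|=|\xi_2|+|\xi_4|+|\xi_6|=PQ=|v_ry_r|\,|v_sy_s|>0$, establishing the claim. I expect the main obstacle to be the second step: verifying across the case split that the two gaps are interior and disjoint for large $j$, and simultaneously choosing the constant remainders so that all seven length matchings $|\xi_k|=|\mu_k|$ hold. Degenerate configurations --- any of $v_r,y_r,v_s,y_s$ empty, or the two gaps growing at the same rate and possibly merging --- collapse one or two of the jointly-pumpable sections, but these are absorbed by allowing the corresponding $w_k$ to be empty, the inequality $|w_2w_4\cdots w_{12}|>0$ remaining intact because $PQ>0$.
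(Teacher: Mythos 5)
Your outline coincides with the paper's strategy for this lemma --- double-stranded structure, case split on the sign of $|v_r||y_s|-|v_s||y_r|$, three jointly iterated sections, the identity in Eq.~(\ref{prop1}), and reading off the thirteen strings exactly as in Eq.~(\ref{h3}) --- but there is a genuine gap at the decisive step, and it sits precisely in your choice of exponents $Qj+1$ and $Pj+1$. For $\xi_2,\xi_4,\xi_6$ and $\mu_2,\mu_4,\mu_6$ to be ``whole numbers of copies of the relevant blocks,'' the per-unit-$j$ growth of each of the three sections must be a common multiple of the two periods it synchronizes: since $\xi_2^{\,j-j_0}$ is a factor of $v_r^{\infty}$ with unboundedly many repetitions, $|\xi_2|$ must be a multiple of (the primitive period of) $v_r$, while $|\xi_2|=|\mu_2|$ must simultaneously be a multiple of $|v_s|$; similarly $(\xi_4,\mu_4)$ must synchronize $|v_r|$ with $|y_s|$, and $(\xi_6,\mu_6)$ must synchronize $|y_r|$ with $|y_s|$ (in your case $|v_r||y_s|>|v_s||y_r|$). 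But with your exponents these growths are forced by the geometry you yourself describe: $|\mu_2|=|v_s|P$ (the rate at which the gap $x_s$ moves, since $x_s$ must remain inside the fixed piece $\mu_3$), $|\xi_4|=|v_r|Q-|v_s|P=|v_r||y_s|-|v_s||y_r|$, and $|\xi_6|=|y_r|Q$, and none of these need satisfy the divisibility requirements. Concretely, take $|v_r|=2$, $|y_r|=1$, $|v_s|=1$, $|y_s|=2$, so $P=Q=3$ and $|v_r||y_s|=4>1=|v_s||y_r|$: all three growth rates equal $3$, yet $|\xi_2|$ and $|\xi_4|$ must be even (period $|v_r|=2$ on top) and $|\mu_4|$, $|\mu_6|$ must be even (period $|y_s|=2$ below) whenever $v_r$ and $y_s$ are primitive. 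So no decomposition of the promised form exists; your identity $|\xi_2|+|\xi_4|+|\xi_6|=PQ$ is arithmetically correct, but the individual sections are not pumpable. This is not a routine verification you deferred --- it is the reason the paper does \emph{not} use $Qj+1$, $Pj+1$ (that pattern works in Lemma~2 only because there each section's growth is a pure product of one top period and the single bottom period).

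The repair is exactly the paper's move: inflate both exponents by the factor $|v_r||y_s|$ when $|v_r||y_s|>|v_s||y_r|$, i.e.\ take $r_j=u_rv_r^{|v_r||y_s||v_sy_s|j+1}x_ry_r^{|v_r||y_s||v_sy_s|j+1}z_r$ and $s_j=u_sv_s^{|v_r||y_s||v_ry_r|j+1}x_sy_s^{|v_r||y_s||v_ry_r|j+1}z_s$ as in Eqs.~(\ref{ri3})--(\ref{si3}), and symmetrically with the factor $|v_s||y_r|$ when the inequality is reversed (equivalently: run your construction only over $j$ that are multiples of $|v_r||y_s|$, resp.\ $|v_s||y_r|$). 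The section growths then become $|v_r||v_s||y_s||v_ry_r|$, $|v_r||y_s|\bigl(|v_r||y_s|-|v_s||y_r|\bigr)$ and $|v_r||y_r||y_s||v_sy_s|$, each manifestly a common multiple of the two relevant periods, after which your folding computation and the read-off of $w_1,\dots,w_{13}$ go through verbatim, with $|w_2w_4\cdots w_{12}|=|v_r||y_s||v_ry_r||v_sy_s|>0$ in place of $PQ$. Two remarks on your degenerate cases: the configuration you flag as problematic, the two gaps moving at the same rate, is in fact the harmless one --- if $|v_r||y_s|=|v_s||y_r|\neq 0$ then $|v_s|P=|v_r|Q$ and $|y_r|Q=|y_s|P$, so the divisibility conditions hold automatically and the middle section is simply empty ($\xi_4=\mu_4=\eee$, giving $w_4=w_{10}=\eee$, as the paper notes); whereas if $|v_r||y_s|=|v_s||y_r|=0$, the constraints $|v_ry_r|>0$ and $|v_sy_s|>0$ force $v_r=v_s=\eee$ or $y_r=y_s=\eee$, and one falls back on the single-block construction of Lemma~1 --- this needs the short separate treatment the paper gives it, not just ``allowing $w_k$ to be empty,'' since the general three-section bookkeeping presupposes both iterated regions are present on each strand.
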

\begin{proof}
We proceed as in the previous lemmas, except that the pumping lemma for context-free languages is used for both  strings $r\in L_1$ and $s\in L_2$. Thus, $r$ is written as $r=u_rv_rx_ry_rz_r$, with  $|v_ry_r|>0$ and $r_j=u_rv_r^jx_ry_r^jz_r\in L_1$ for any $j\ge 0$, and $s$ is written as $s=u_sv_sx_sy_sz_s$, with  $|v_sy_s|>0$ and $s_j=u_sv_s^jx_sy_s^jz_s\in L_2$ for any $j\ge0$. 

Consider first the case in which $|v_r||y_s|> |v_s||y_r|$. Strings $r_j$ and $s_j$ are defined as 
\begin{align}
r_j& = u_rv_r^{|v_r||y_s||v_sy_s|j+1}x_ry_r^{|v_r||y_s||v_sy_s|j+1}z_r,\label{ri3}\\
s_j& = u_sv_s^{|v_r||y_s||v_ry_r|j+1}x_sy_s^{|v_r||y_s||v_ry_r|j+1}z_s, \label{si3}
\end{align}
for any $j\ge 0$, and their double stranded arrangement is illustrated in Fig.~\ref{strand3}.   

\begin{figure}
\centering
\begin{pspicture}(-2,1)(12,4.5)
%\showgrid
\uput[0](-1.75,2.67){
$\begin{bmatrix}
r_j\\
s_j
\end{bmatrix}
=
\begin{bmatrix}
\hspace*{11.8cm}\\
\ 
\end{bmatrix}
$
}

\psframe[linecolor=fixed, fillstyle=solid,fillcolor=fixed](0,2.7)(11.6,3.15)
\psframe[linecolor=iter1, fillstyle=solid,fillcolor=iter1](1,2.7)(7,3.15)
\psframe[linecolor=iter1, fillstyle=solid,fillcolor=iter1](7.6,2.7)(10.1,3.15)
\psframe[linecolor=iter2, fillstyle=solid,fillcolor=iter2](1.3,2.7)(3.5,3.15)
\psframe[linecolor=iter2, fillstyle=solid,fillcolor=iter2](5.1,2.7)(6.4,3.15)
\psframe[linecolor=iter2, fillstyle=solid,fillcolor=iter2](8.2,2.7)(10.1,3.15)

\psframe[linecolor=fixeda, fillstyle=solid,fillcolor=fixeda](0,2.15)(11.6,2.58)
\psframe[linecolor=iter1a, fillstyle=solid,fillcolor=iter1a](1.3,2.15)(4,2.58)
\psframe[linecolor=iter1a, fillstyle=solid,fillcolor=iter1a](4.6,2.15)(10.6,2.58)
\psframe[linecolor=iter2a, fillstyle=solid,fillcolor=iter2a](1.3,2.15)(3.5,2.58)
\psframe[linecolor=iter2a, fillstyle=solid,fillcolor=iter2a](5.1,2.15)(6.4,2.58)
\psframe[linecolor=iter2a, fillstyle=solid,fillcolor=iter2a](8.2,2.15)(10.1,2.58)

\uput[90](0,2.55){$|$}
\uput[90](0.5,2.55){$u_r$}
\uput[90](1,2.55){$|$}
\psline[linewidth=.5pt]{<-}(1,2.9)(2.7,2.9)
\psline[linewidth=.5pt]{->}(5.3,2.9)(7,2.9)
\uput[90](4,2.55){\small $v_r^{|v_r||y_s||v_sy_s|j+1}$}
\uput[90](7,2.55){$|$}
\uput[90](7.3,2.55){$x_r$}
\uput[90](7.6,2.55){$|$}
%\psline[linewidth=.5pt]{<-}(7.6,2.9)(2.7,2.9)
%\psline[linewidth=.5pt]{->}(5.3,2.9)(7,2.9)
\uput[90](8.85,2.55){\small $y_r^{|v_r||y_s||v_sy_s|j+1}$}

\uput[90](10.1,2.55){$|$}
\uput[90](10.85,2.55){$z_r$}
\uput[90](11.6,2.55){$|$}

\psline[linewidth=.5pt](-1.28,2.65)(-.95,2.65)
\psline[linewidth=.5pt](0,2.65)(11.6,2.65)

\uput[90](0,2){$|$}
\uput[90](0.65,2){$u_s$}
\uput[90](1.3,2){$|$}
\uput[90](2.65,2){\small $v_s^{|v_r||y_s||v_ry_r|j+1}$}

\uput[90](4,2){$|$}
\uput[90](4.3,2){$x_s$}
\uput[90](4.6,2){$|$}

\psline[linewidth=.5pt]{<-}(4.6,2.35)(6.3,2.35)
\psline[linewidth=.5pt]{->}(8.9,2.35)(10.6,2.35)
\uput[90](7.6,2){\small $y_s^{|v_r||y_s||v_ry_r|j+1}$}

\uput[90](10.6,2){$|$}
\uput[90](11.1,2){$z_s$}
\uput[90](11.6,2){$|$}

\psbrace[fillcolor=gray,rot=-90,ref=bC](1.3,3.2)(0,3.2){$\xi_1$}
\psbrace[fillcolor=gray,rot=-90,ref=bC](3.5,3.2)(1.3,3.2){$\xi_2^{j-j_0}$}
\psbrace[fillcolor=gray,rot=-90,ref=bC](5.1,3.2)(3.5,3.2){$\xi_3$}
\psbrace[fillcolor=gray,rot=-90,ref=bC](6.4,3.2)(5.1,3.2){$\xi_4^{j-j_0}$}
\psbrace[fillcolor=gray,rot=-90,ref=bC](8.2,3.2)(6.4,3.2){$\xi_5$}
\psbrace[fillcolor=gray,rot=-90,ref=bC](10.1,3.2)(8.2,3.2){$\xi_6^{j-j_0}$}
\psbrace[fillcolor=gray,rot=-90,ref=bC](11.6,3.2)(10.1,3.2){$\xi_7$}

\psbrace[fillcolor=gray,rot=90,ref=tC](0,2.1)(1.3,2.1){$\mu_1$}
\psbrace[fillcolor=gray,rot=90,ref=tC](1.3,2.1)(3.5,2.1){$\mu_2^{j-j_0}$}
\psbrace[fillcolor=gray,rot=90,ref=tC](3.5,2.1)(5.1,2.1){$\mu_3$}
\psbrace[fillcolor=gray,rot=90,ref=tC](5.1,2.1)(6.4,2.1){$\mu_4^{j-j_0}$}
\psbrace[fillcolor=gray,rot=90,ref=tC](6.4,2.1)(8.2,2.1){$\mu_5$}
\psbrace[fillcolor=gray,rot=90,ref=tC](8.2,2.1)(10.1,2.1){$\mu_6^{j-j_0}$}
\psbrace[fillcolor=gray,rot=90,ref=tC](10.1,2.1)(11.6,2.1){$\mu_7$}

\end{pspicture}
	\caption{Double stranded structure built with strings $r_j\in L_1$ and $s_j\in L_2$, in the case in which both $L_1$ and $L_2$ are context-free languages and $|v_r||y_s|>|v_s||y_r|$.}
\label{strand3}
\end{figure}
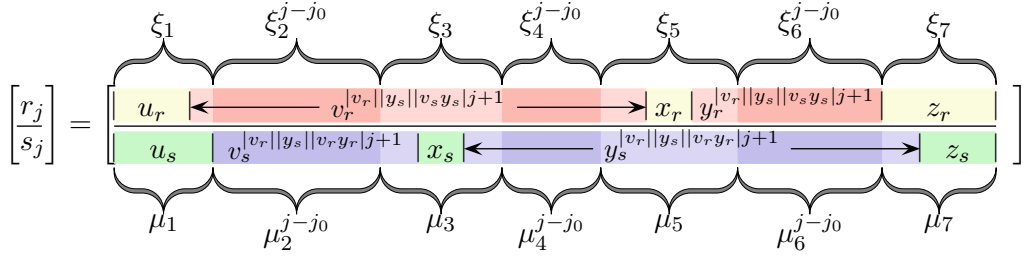

For $j$ large enough, and since $|v_r||y_s|> |v_s||y_r|$, then repetitions of $v_r$ overlap with repetitions of both $v_s$ (if $|v_s|>0$) and $y_s$, and repetitions of $y_s$ overlap with repetitions of both $v_r$ and $y_r$ (if $|y_r|>0$). Then, we may seek expressions of $r_j$ and $s_j$ of the form 
\begin{align}
r_j &=\xi_1\xi_2^{j-j_0}\xi_3\xi_4^{j-j_0}\xi_5\xi_6^{j-j_0}\xi_7, \label{ri41}\\
s_j &=\mu_1\mu_2^{j-j_0}\mu_3\mu_4^{j-j_0}\mu_5\mu_6^{j-j_0}\mu_7, \label{si41}
\end{align}
with substrings $\xi_k$ and $\mu_k$ such that $|\xi_k|=|\mu_k|$ for $k=1, 2, \ldots, 7$, and for $j\ge j_0$, where $j_0$ is a constant to be determined.

In Eqs.~(\ref{ri3}) and (\ref{si3}), the iterated portions may be decomposed as
\begin{align}
 v_r^{|v_r||y_s||v_sy_s|j+1} & = \alpha_1\xi_2^{(j-j_0)}\alpha_2\xi_4{(j-j_0)}\alpha_3, \\
y_r^{|v_r||y_s||v_sy_s|j+1} &=\beta_1\xi_6^{(j-j_0)}\beta_2,\\
v_s^{|v_r||y_s||v_ry_r|j+1} &=\gamma_1\mu_2^{(j-j_0)}\gamma_2,\\
y_s^{|v_r||y_s||v_ry_r|j+1} &=\delta_1\mu_4^{(j-j_0)}\delta_2\mu_6^{(j-j_0)}\delta_3,
\end{align}

with 
\begin{align}
\alpha_1\alpha_2\alpha_3  &= v_r^{j_0|v_r||y_s||v_sy_s| + 1},\label{aux13}\\
\beta_1\beta_2 &= y_r^{j_0|v_r||y_s||v_sy_s| + 1},\label{aux23}\\
 \gamma_1\gamma_2&= v_s^{j_0|v_r||y_s||v_ry_r| + 1}\label{aux33}\\
 \delta_1\delta_2\delta_3&= v_s^{j_0|v_r||y_s||v_ry_r| + 1}.\label{aux43}
\end{align}
Consequently, $\xi_2$ and $\mu_2$ will be formed by repetitions of strings $v_r$ and $v_s$, respectively, with $|\xi_2|=|\mu_2|=|v_r||v_s||y_s||v_ry_r|$; $\xi_4$ and $\mu_4$ will be formed by repetitions of strings $v_r$ and $y_s$, respectively, with $|\xi_4|=|\mu_4|=|v_r||y_s|(|v_r||y_s|-|v_s||y_r|)$; and $\xi_6$ and $\mu_6$ will be formed by repetitions of strings $y_r$ and $y_s$, respectively, with $|\xi_6|=|\mu_6|=|v_r||y_r||y_s||v_sy_s|$.
Next, we set $\xi_1 =u_r\alpha_1$, $\xi_3 =\alpha_2$, $\xi_5=\alpha_3x_r\beta_1$, $\xi_7 =\beta_2z_r$, $\mu_1 =u_s\gamma_1$, $\mu_3=\gamma_2x_s\delta_1$, $\mu_5 =\delta_2$, $\mu_7=\delta_3z_s$. 

Substrings $\alpha_1$, $\alpha_2$, $\alpha_3$, $\beta_1$, $\beta_2$, $\gamma_1$, $\gamma_2$, $\delta_1$, $\delta_2$ and $\delta_3$ are selected so that $|\xi_1=\mu_1|$, $|\xi_3=\mu_3|$, $|\xi_5=\mu_5|$, $|\xi_7=\mu_7|$, with $\delta_1=\eee$, $|\alpha_2|=|\gamma_2| +|x_s|$, 
\begin{equation}
\left\{
\begin{array}{l}
\gamma_1=\eee, |\alpha_1|= |u_s|-|u_r|,  \text{ if } |u_s|\ge|u_r|,\\
\alpha_1=\eee, |\gamma_1|= |u_r|-|u_s|,  \text{ if } |u_s|<|u_r|, 
\end{array}
\right.
\label{aaa}
\end{equation}
\begin{equation}
\left\{
\begin{array}{l}
\delta_3=\eee, |\beta_2|= |z_s|-|z_r|,  \text{ if } |z_s|\ge|z_r|,\\
\beta_2=\eee, |\delta_3|= |z_r|-|z_s|,  \text{ if } |z_s|<|z_r|,  
\end{array}
\right.
\end{equation}
and using Eqs.\ (\ref{aux13}) to (\ref{aux43}) for a value of $j_0$  large enough.

Finally, we compute $h(r_j, s_j)$ using Eqs.\ (\ref{ri41}), (\ref{si41}), and the identity in Eq.\ (\ref{prop1}). Then, letting $i=j-j_0$,
\begin{multline}
h(\xi_1\xi_2^i\xi_3\xi_4^i\xi_5\xi_6^i\xi_7, \mu_1\mu_2^i\mu_3\mu_4^i\mu_5\mu_6^i\mu_7)  =\\(\xi_7)_\str{u}^\mathcal{R}[(\xi_6)_\str{u}^\mathcal{R}]^i (\xi_5)_\str{u}^\mathcal{R}[(\xi_4)_\str{u}^\mathcal{R}]^i(\xi_3)_\str{u}^\mathcal{R}[(\xi_2)_\str{u}^\mathcal{R}]^i(\xi_1)_\str{u}^\mathcal{R}(\xi_1)_\str{d}(\xi_2)_\str{d}^i(\xi_3)_\str{d}(\xi_4)_\str{d}^i(\xi_5)_\str{d}(\xi_6)_\str{d}^i(\xi_7)_\str{d}.
\label{h3}
\end{multline}

The condition stated by the lemma is obtained from Eq.~(\ref{h3}) with  $w_1=(\xi_7)_\str{u}^\mathcal{R}$, $w_2=(\xi_6)_\str{u}^\mathcal{R}$, $w_3=(\xi_5)_\str{u}^\mathcal{R}$, $w_4=(\xi_4)_\str{u}^\mathcal{R}$, $w_5=(\xi_3)_\str{u}^\mathcal{R}$, $w_6=(\xi_2)_\str{u}^\mathcal{R}$, $w_7=(\xi_1)_\str{u}^\mathcal{R}(\xi_1)_\str{d}$, $w_8=(\xi_2)_\str{d}$, $w_9=(\xi_3)_\str{d}$, $w_{10}=(\xi_4)_\str{d}$, $w_{11}=(\xi_5)_\str{d}$, $w_{12}=(\xi_6)_\str{d}$, $w_{13}=(\xi_7)_\str{d}$. Also, $|w_2w_4w_6w_8w_{10}w_{12}|=|\xi_2\xi_4\xi_6|=|v_r||y_s||v_ry_r||v_sy_s|> 0$.

Consider next the case in which $|v_r||y_s|<|v_s||y_r|$, and define
\begin{align}
r_j& = u_rv_r^{|v_s||y_r||v_sy_s|j+1}x_ry_r^{|v_s||y_r||v_sy_s|j+1}z_r,\label{ri3b}\\
s_j& = u_sv_s^{|v_s||y_r||v_ry_r|j+1}x_sy_s^{|v_s||y_r||v_ry_r|j+1}z_s,\label{si3b}
\end{align}

for any $j\ge 0$, with the double stranded arrangement  illustrated in Fig.~\ref{strand4}.   

\begin{figure}
\centering
\begin{pspicture}(-2,1)(12,4.5)
%\showgrid
\uput[0](-1.75,2.67){
$\begin{bmatrix}
r_j\\
s_j
\end{bmatrix}
=
\begin{bmatrix}
\hspace*{11.8cm}\\
\ 
\end{bmatrix}
$
}

\psframe[linecolor=fixed, fillstyle=solid,fillcolor=fixed](0,2.15)(11.6,2.58)
\psframe[linecolor=iter1, fillstyle=solid,fillcolor=iter1](1,2.15)(7,2.58)
\psframe[linecolor=iter1, fillstyle=solid,fillcolor=iter1](7.6,2.15)(10.1,2.58)
\psframe[linecolor=iter2, fillstyle=solid,fillcolor=iter2](1.3,2.15)(3.5,2.58)
\psframe[linecolor=iter2, fillstyle=solid,fillcolor=iter2](5.1,2.15)(6.4,2.58)
\psframe[linecolor=iter2, fillstyle=solid,fillcolor=iter2](8.2,2.15)(10.1,2.58)

\psframe[linecolor=fixeda, fillstyle=solid,fillcolor=fixeda](0,2.7)(11.6,3.15)
\psframe[linecolor=iter1a, fillstyle=solid,fillcolor=iter1a](1.3,2.7)(4,3.15)
\psframe[linecolor=iter1a, fillstyle=solid,fillcolor=iter1a](4.6,2.7)(10.6,3.15)
\psframe[linecolor=iter2a, fillstyle=solid,fillcolor=iter2a](1.3,2.7)(3.5,3.15)
\psframe[linecolor=iter2a, fillstyle=solid,fillcolor=iter2a](5.1,2.7)(6.4,3.15)
\psframe[linecolor=iter2a, fillstyle=solid,fillcolor=iter2a](8.2,2.7)(10.1,3.15)

\uput[90](0,2.){$|$}
\uput[90](0.5,2.){$u_s$}
\uput[90](1,2.){$|$}
\psline[linewidth=.5pt]{<-}(1,2.35)(2.7,2.35)
\psline[linewidth=.5pt]{->}(5.3,2.35)(7,2.35)
\uput[90](4,2.){\small $v_s^{|v_s||y_r||v_ry_r|j+1}$}
\uput[90](7,2.){$|$}
\uput[90](7.3,2.){$x_s$}
\uput[90](7.6,2.){$|$}
\uput[90](8.85,2.){\small $y_s^{|v_s||y_r||v_ry_r|j+1}$}

\uput[90](10.1,2.){$|$}
\uput[90](10.85,2.){$z_s$}
\uput[90](11.6,2.){$|$}

\psline[linewidth=.5pt](-1.28,2.65)(-.95,2.65)
\psline[linewidth=.5pt](0,2.65)(11.6,2.65)

\uput[90](0,2.55){$|$}
\uput[90](0.65,2.55){$u_r$}
\uput[90](1.3,2.55){$|$}
\uput[90](2.65,2.55){\small $v_r^{|v_s||y_r||v_sy_s|j+1}$}

\uput[90](4,2.55){$|$}
\uput[90](4.3,2.55){$x_r$}
\uput[90](4.6,2.55){$|$}

\psline[linewidth=.5pt]{<-}(4.6,2.9)(6.3,2.9)
\psline[linewidth=.5pt]{->}(8.9,2.9)(10.6,2.9)

\uput[90](7.6,2.55){\small $y_r^{|v_s||y_r||v_sy_s|j+1}$}

\uput[90](10.6,2.55){$|$}
\uput[90](11.1,2.55){$z_r$}
\uput[90](11.6,2.55){$|$}

\psbrace[fillcolor=gray,rot=-90,ref=bC](1.3,3.2)(0,3.2){$\xi_1$}
\psbrace[fillcolor=gray,rot=-90,ref=bC](3.5,3.2)(1.3,3.2){$\xi_2^{j-j_0}$}
\psbrace[fillcolor=gray,rot=-90,ref=bC](5.1,3.2)(3.5,3.2){$\xi_3$}
\psbrace[fillcolor=gray,rot=-90,ref=bC](6.4,3.2)(5.1,3.2){$\xi_4^{j-j_0}$}
\psbrace[fillcolor=gray,rot=-90,ref=bC](8.2,3.2)(6.4,3.2){$\xi_5$}
\psbrace[fillcolor=gray,rot=-90,ref=bC](10.1,3.2)(8.2,3.2){$\xi_6^{j-j_0}$}
\psbrace[fillcolor=gray,rot=-90,ref=bC](11.6,3.2)(10.1,3.2){$\xi_7$}

\psbrace[fillcolor=gray,rot=90,ref=tC](0,2.1)(1.3,2.1){$\mu_1$}
\psbrace[fillcolor=gray,rot=90,ref=tC](1.3,2.1)(3.5,2.1){$\mu_2^{j-j_0}$}
\psbrace[fillcolor=gray,rot=90,ref=tC](3.5,2.1)(5.1,2.1){$\mu_3$}
\psbrace[fillcolor=gray,rot=90,ref=tC](5.1,2.1)(6.4,2.1){$\mu_4^{j-j_0}$}
\psbrace[fillcolor=gray,rot=90,ref=tC](6.4,2.1)(8.2,2.1){$\mu_5$}
\psbrace[fillcolor=gray,rot=90,ref=tC](8.2,2.1)(10.1,2.1){$\mu_6^{j-j_0}$}
\psbrace[fillcolor=gray,rot=90,ref=tC](10.1,2.1)(11.6,2.1){$\mu_7$}

\end{pspicture}
\caption{Double stranded structure built with strings $r_j\in L_1$ and $s_j\in L_2$, in the case in which both $L_1$ and $L_2$ are context-free languages and $|v_r||y_s|<|v_s||y_r|$.}
\label{strand4}
\end{figure}
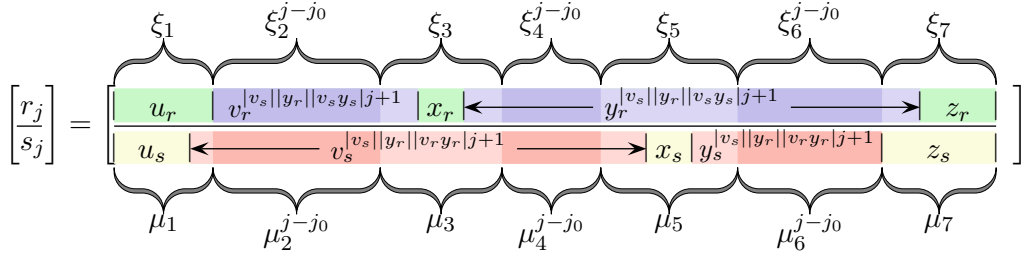

We seek expressions of $r_j$ and $s_j$ of the form in Eqs.\ (\ref{ri41}) and (\ref{si41}), and decompose the iterated portions in Eqs.~(\ref{ri3b}) and (\ref{si3b}) as 
\begin{align}
v_r^{|v_s||y_r||v_sy_s|j+1} & = \alpha_1\xi_2^{(j-j_0)}\alpha_2, \\
y_r^{|v_s||y_r||v_sy_s|j+1} &=\beta_1\xi_4^{(j-j_0)}\beta_2\xi_6^{(j-j_0)}\beta_3,\\
v_s^{|v_s||y_r||v_ry_r|j+1} &=\gamma_1\mu_2^{(j-j_0)}\gamma_2\mu_4^{(j-j_0)}\gamma_3,\\
y_s^{|v_s||y_r||v_ry_r|j+1} &=\delta_1\mu_6^{(j-j_0)}\delta_2,
\end{align}

with 
\begin{align}
\alpha_1\alpha_2  &= v_r^{j_0|v_s||y_r||v_sy_s| + 1},\label{aux14}\\
\beta_1\beta_2\beta_3 &= y_r^{j_0|v_s||y_r||v_sy_s| + 1},\label{aux24}\\
 \gamma_1\gamma_2\gamma_3&= v_s^{j_0|v_s||y_r||v_ry_r| + 1},\label{aux34}\\
 \delta_1\delta_2&= v_s^{j_0|v_s||y_r||v_ry_r| + 1}.\label{aux44}
\end{align}

Consequently, $\xi_2$ and $\mu_2$ will be formed by repetitions of strings $v_r$ and $v_s$, respectively, with $|\xi_2|=|\mu_2|=|v_r||v_s||y_r||v_sy_s|$; $\xi_4$ and $\mu_4$ will be formed by repetitions of strings $y_r$ and $v_s$, respectively, with $|\xi_4|=|\mu_4|=|y_r||v_s|(|v_s||y_r|-|v_r||y_s|)$; and $\xi_6$ and $\mu_6$ will be formed by repetitions of strings $y_r$ and $y_s$, respectively, with $|\xi_6|=|\mu_6|=|y_s||v_s||y_r||v_ry_r|$.
Next, we set $\xi_1 =u_r\alpha_1$, $\xi_3 =\alpha_2x_r\beta_1$, $\xi_5=\beta_2$, $\xi_7 =\beta_3z_r$, $\mu_1 =u_s\gamma_1$, $\mu_3=\gamma_2$, $\mu_5 =\gamma_3x_s\delta_1$, $\mu_7=\delta_2z_s$. 

Substrings $\alpha_1$, $\alpha_2$, $\alpha_3$, $\beta_1$, $\beta_2$, $\gamma_1$, $\gamma_2$, $\delta_1$, $\delta_2$ and $\delta_3$ are selected so that $|\xi_1=\mu_1|$, $|\xi_3=\mu_3|$, $|\xi_5=\mu_5|$, $|\xi_7=\mu_7|$, with $\gamma_3=\eee$, $|\beta_2|=|\delta_1| +|x_s|$, 
\begin{equation}
\left\{
\begin{array}{l}
\delta_2=\eee, |\beta_3|= |z_s|-|z_r|,  \text{ if } |z_s|\ge|z_r|,\\
\beta_3=\eee, |\delta_2|= |z_r|-|z_s|,  \text{ if } |z_s|<|z_r|,  
\end{array}
\right.
\end{equation}
and using Eq.\ (\ref{aaa}) and Eqs.\ (\ref{aux14}) to (\ref{aux44}), for a value of $j_0$ large enough.

The demonstration then follows similar steps as in the previous case.

Finally, consider the remaining case of $|v_r||y_s|=|v_s||y_r|$. If $|v_r||y_s|\neq 0$ and $|v_s||y_r|\neq 0$, then the same constructions of the previous cases work, resulting in $\xi_4=\mu_4=\eee$ and therefore $w_4=w_{10}=\eee$.

If $|v_r||y_s|=|v_s||y_r|=0$, then either $v_r=v_s=\eee$ or $y_r=y_s=\eee$ (recall that the pumping lemma for context-free languages demands $|v_ry_r|>0$ and $|v_sy_s|>0$). 
Assume first $v_r=v_s=\eee$, and define
\begin{align}
r_j& = u_rx_ry_r^{|y_s|j+1}z _r,\\
s_j& = u_sx_sy_s^{|y_r|j+1}z_s,
\end{align}
for any $j\ge 0$. Next, follow the same procedure as in Lemma 1 to express $r_j$ and $s_j$ in the form of Eqs. (\ref{ri1}) and (\ref{si1}), 
with the exception that $\xi_1 =u_rx_r\alpha_1$ and  $\mu_1 =u_sx_s\beta_1$ (instead of $\xi_1 =x_r\alpha_1$ and  $\mu_1 =x_s\beta_1$, respectively). Then, $h(r_j, s_j)$ is given by Eq.\ (\ref{h1}) and the condition stated by the present lemma is obtained with  $w_1=(\xi_3)_\str{u}^\mathcal{R}$, $w_2=(\xi_2)_\str{u}^\mathcal{R}$, $w_3=w_4=w_5=w_6=\eee$, $w_7=(\xi_1)_\str{u}^\mathcal{R}(\xi_1)_\str{d}$, $w_8=w_9=w_{10}=w_{11}=\eee$, $w_{12}=(\xi_2)_\str{d}$, $w_{13}=(\xi_3)_\str{d}$. Also, $|w_2w_4w_6w_8w_{10}w_{12}|=|\xi_2|=|y_r||y_s|\ge 1$.
 
If $y_r=y_s=\eee$, proceed as above  with 
\begin{align}
r_j& = u_rv_r^{|v_s|j+1}x_rz _r,\\
s_j& = u_sv_s^{|v_r|j+1}x_sz_s,
\end{align}
for any $j\ge 0$.
\end{proof}

\section{Final remarks and example}

The lemmas only apply to infinite languages. However, any finite language is in class $\mathcal{F}(\text{\sffamily  REG}, \text{\sffamily REG})$. Let $L_1$ be any arbitrary language and define $\Phi=(L_1, \str{d}^*)$; then, $L_1=L(\Phi)$. Therefore, if $L_1$ finite then it is regular and, consequently,  $L_1\in\mathcal{F}(\text{\sffamily  REG}, \text{\sffamily REG})$. Further, and since $\text{\sffamily  REG}\subset \text{\sffamily CF}$, then $L_1$ also is in 
$\mathcal{F}(\text{\sffamily  CF}, \text{\sffamily REG})$, $\mathcal{F}(\text{\sffamily  REG}, \text{\sffamily CF})$, and \linebreak $\mathcal{F}(\text{\sffamily  CF}, \text{\sffamily CF})$.

In spite of the lemmas being weak, they still are useful to prove non membership of some languages  in a class, as the following example shows.

\begin{example}
Consider $L=\{\str{a}^n|\,n$ is prime $\}$ and assume that $L$ satisfies the lemma. Then, there are strings $w_1, w_2, \ldots, w_{13}\in \str{a}^*$, with  $|w_2w_4w_6\cdots w_{12}| >0$, such that $w_1w_2^iw_3w_4^iw_5\cdots w_{11}w_{12}^iw_{13}\in L$ for all $i\ge 0$. Letting $i=1$, we have that \linebreak  $w_1w_2w_3w_4w_5\cdots w_{11}w_{12}w_{13}=\str{a}^k$ for some prime $k$. Then, $w_2w_4w_6\cdots w_{12}=\str{a}^\ell$ with $0<\ell\le k$. Next, letting $i=k+1$, we obtain $w_1w_2^iw_3w_4^iw_5\cdots w_{11}w_{12}^iw_{13}=  \str{a}^{k(1+\ell)}$. However, this string is not in $L_2$, because the number of \str{a}'s is not a prime. The contradiction implies that  $L_2\notin\mathcal{F}(\text{\sffamily  CF}, \text{\sffamily CF})$.
\end{example}

\section*{Acknowledgements}
This work was supported by Conselho Nacional de Desenvolvimento Científico e Tecnológico  (CNPq, Brazil).

\end{document}